\documentclass[conference]{IEEEtran}
\IEEEoverridecommandlockouts

\usepackage{cite}
\usepackage{amsmath,amssymb,amsfonts,bm,amsthm}
\usepackage{algorithm}
\usepackage{algorithmic}
\usepackage{array}
\usepackage{multirow}
\usepackage[caption=false,font=footnotesize]{subfig}
\usepackage{textcomp}
\usepackage{stfloats}
\usepackage{url}
\usepackage{graphicx}
\usepackage{booktabs}
\usepackage{xcolor}
\usepackage{balance}
\usepackage{enumitem}
\usepackage{pifont}
\allowdisplaybreaks[4]

\newtheorem{proposition}{Proposition}
\newcommand{\method}{RC-BFM}

\begin{document}

\title{Low-Latency Generative Semantic Communication via Channel-Realization Flow Matching
    \thanks{This work was supported in part by the National Key Research and
        Development Program of China under Grant No. 2023YFB2904300;
        and in part by the National Natural Science Foundation of China under Grant No. 62595731, No. 62293484, and No. 62325107; and in part by the Program of Jiangsu Province under Grant NTACT-2024-Z-001.
    }
}

\author{
    \IEEEauthorblockN{Fan Gao$^{1,2}$, Youzheng Wang$^{1,2}$, Zhijin Qin$^{1,2}$, and Feifei Gao$^{1,3}$}
    \IEEEauthorblockA{$^1$Beijing National Research Center for Information Science and Technology (BNRist), Tsinghua University, Beijing, China \\
        $^2$Department of Electronic Engineering, $^3$Department of Automation, Tsinghua University, Beijing, China \\
        gaof23@mails.tsinghua.edu.cn, \{yzhwang, qinzhijin\}@tsinghua.edu.cn, feifeigao@ieee.org}
}

\maketitle

\begin{abstract}
    Generative semantic communication receivers deliver high perceptual quality but suffer from prohibitive decoding latency. This bottleneck arises because diffusion receivers rely on stochastic iterative decoding, while existing flow matching receivers employ independent endpoint coupling that ignores the physical source--channel link, yielding unnecessarily long and curved sampling trajectories. In this paper, we reformulate receiver-side recovery as a realization-coupled bridge flow matching problem under explicit bandwidth and power constraints. Specifically, we propose Realization-Coupled Bridge Flow Matching (RC-BFM), where the decoder initializes from a channel-induced semantic state rather than isotropic noise. Crucially, training pairs are linked via a realization-coupled entropic optimal transport (RC-OT) plan that preserves the physical channel realization of each transmission while maintaining robustness to stochastic fading. Furthermore, we identify independent coupling as the fundamental source of a conditional train--test distribution shift in conditional flow matching-based receivers, and derive an end-to-end distortion bound whose discretization error decays as \(O(K^{-2})\). Experiments on CIFAR-10 and FFHQ-64\(\times\)64 over AWGN and Rayleigh fading channels demonstrate that RC-BFM achieves a superior fidelity--perception trade-off, reducing decoding latency by over 10\(\times\) compared to diffusion-based receivers.
\end{abstract}

\begin{IEEEkeywords}
    Semantic communication, generative receiver, flow matching, optimal transport, low-latency decoding
\end{IEEEkeywords}

\section{Introduction}

% 6G->semcom->deepjscc
Sixth-generation (6G) wireless networks target latency-critical visual applications such as extended reality, autonomous perception, and digital twins~\cite{ITU_M2160_2023,you2025next}. Semantic communication addresses the resulting bandwidth--latency tension by transmitting the underlying meaning of data rather than raw bit streams~\cite{qin2025generative,gunduz2023beyond}. DeepJSCC~\cite{bourtsoulatze2019deep} maps source images directly to channel symbols via an end-to-end encoder-decoder, avoiding the cliff effect of source-channel separation. But optimizing pixel-level distortion metrics only produces blurry, perceptually unrealistic reconstructions that lack fine-grained textures and semantic consistency under low bandwidths or severe channel impairments~\cite{erdemir2023generative}.

Generative receivers sidestep the fidelity--realism tradeoff by leveraging deep generative priors~\cite{qin2025generative}. Diffusion-based designs~\cite{wang2025diffcom,zhang2025sgdjscc} recover perceptually realistic detail through iterative posterior sampling from isotropic Gaussian noise. However, the reverse chain demands tens to hundreds of neural function evaluations (NFEs), which is incompatible with real-time wireless budgets. The LTT receiver~\cite{fu2026ltt} applies Flow Matching (FM)~\cite{lipman2023flow,liu2023rectified} to semantic communication, cutting NFEs by an order of magnitude while retaining perceptual quality, which straightens the ODE integration trajectories by constructing an interpolation-based velocity field to accelerate receiver-side recovery.

Despite these advances, a fundamental mismatch persists in all existing FM receivers: training pipelines pair the source and target endpoints independently. While harmless in unconditional image generation, this independent product coupling ignores the physical channel realization~\cite{tse2005fundamentals} that links the clean source to the corrupted observation in a communication system. Replacing this physically grounded joint distribution with an independent random coupling has two severe consequences. First, the channel-conditioned source marginal seen by the bridge ODE during training differs from the physical source law at deployment, creating a conditional train-test mismatch that degrades receiver recovery quality. Second, the resulting ODE trajectories are empirically longer and more curved than necessary, limiting receiver decoding speed.

In this paper, we leverage optimal transport (OT)~\cite{cuturi2013sinkhorn} and its entropic extension~\cite{DeBortoli2021DSB} to address both issues. We propose Realization-Coupled Bridge FM (RC-BFM), which explicitly preserves the physical source--channel pairing. Our contributions are summarized as follows:
\begin{enumerate}[label=\arabic*), leftmargin=*]
    \item \textbf{Flow matching based receiver framework:} We propose a realization-coupled bridge FM framework for generative semantic communication. By formulating receiver-side reconstruction as a bridge flow matching decoding from a channel-induced state to the clean image manifold, it ensures the ODE starts from an informative prior rather than isotropic noise.
    \item \textbf{Realization-coupled entropic OT modeling:} We model the physical source--channel link as a realization-coupled pairing problem, and propose a realization-coupled optimal transport (RC-OT) method to solve the conditional train-test mismatch. The coupling concentrates mass on pairs that share the same source image and channel realization, while an entropic regularization provides distributional robustness against stochastic fading.
    \item \textbf{Theoretical and empirical validation:} We identify independent coupling as the source of a conditional train--test distribution shift in CFM-based receivers, and derive a concise distortion decomposition bounded by an $O(K^{-2})$ discretization error term. Experiments on CIFAR-10 and FFHQ-64\(\times\)64 over AWGN and Rayleigh fading show that RC-BFM achieves a superior fidelity--perception balance with more than 10\(\times\) lower latency than diffusion-based receivers.
\end{enumerate}

\section{System Model}

\begin{figure}[t]
    \centering
    \includegraphics[width=\linewidth]{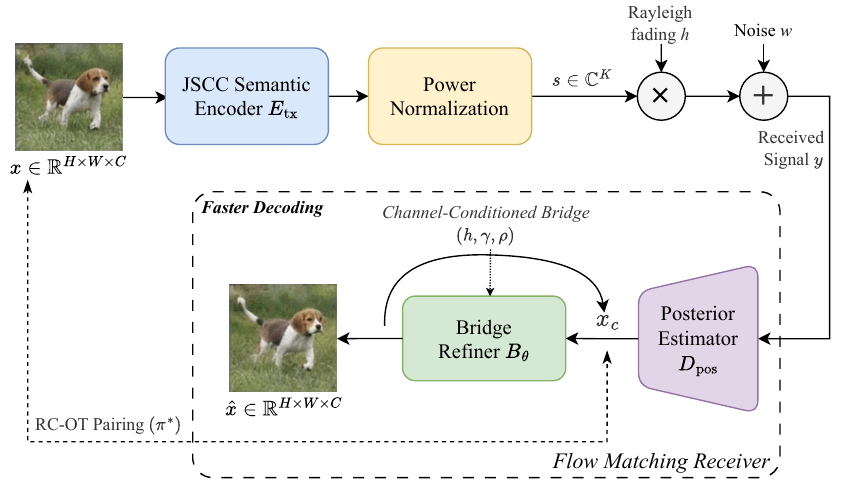}
    \caption{End-to-end RC-BFM system. RC-OT pairing $\pi^*$ enforces a physically grounded source--observation coupling at training time.}
    \label{fig:system_model}
\end{figure}

Fig.~\ref{fig:system_model} illustrates the overall system. Let $x\in\mathbb{R}^{H\times W\times 3}$ denote a source image. A trainable JSCC semantic encoder $E_{\mathrm{tx}}$ maps $x$ to $K$ complex channel symbols $s\in\mathbb{C}^{K}$ subject to the power constraint $\mathbb{E}[\|s\|^2]\le P$, with channel bandwidth ratio (CBR) $\rho = K/(H\times W\times 3)$.

We consider both AWGN and Rayleigh fading channels. The received signal $y$ is
\begin{equation}
    y = h s + w, \quad h\sim\mathcal{CN}(0,1), \quad w\sim\mathcal{CN}(0,\sigma_w^2),
    \label{eq:rayleigh_ch}
\end{equation}
where the SNR is $\gamma=P/\sigma_w^2$ and the channel condition $\xi=(h,\gamma)$ collects all available channel-side information (CSI).

We model the receiver as a bridge flow matching receiver. A posterior estimator first produces a coarse estimate $x_c=D_{\mathrm{pos}}(y,\xi)$, which we term the \emph{channel-induced semantic state}. A neural bridge refiner $B_\theta$ then evolves $x_c$ along a probability flow toward the clean image via the channel-conditioned ODE
\begin{equation}
    \hat{x} = x_c + \int_0^1 v_\theta(x_t, t, \xi)\,dt, \quad x_0 = x_c,
    \label{eq:fm_receiver}
\end{equation}
where $v_\theta$ is a learned velocity field. At training time, $v_\theta$ is fitted by a flow matching loss; at inference, the receiver decodes by directly integrating~\eqref{eq:fm_receiver} with a $K$-step Euler solver from $x_0=x_c$. Section~III details both procedures.

The end-to-end design objective minimizes joint fidelity and perceptual distortion subject to bandwidth, power, and latency constraints:
\begin{equation}
    \begin{aligned}
        & \min_{E_{\mathrm{tx}},D_{\mathrm{pos}},B_\theta}\quad
         \mathbb E\big[d_{\mathrm{fid}}(x,\hat x)+\lambda_{\mathrm{perc}}d_{\mathrm{perc}}(x,\hat x)\big] \\
        & \text{s.t.}\quad
          \rho\le \rho_{\max},\quad \mathbb E[\|\bm{s}\|^2]\le P,\quad \mathrm{NFE}(B_\theta)\le K_{\max},
    \end{aligned}
    \label{eq:overall_obj}
\end{equation}
where $d_{\mathrm{fid}}(x,\hat x)=\|x-\hat x\|_2^2$ measures $L_2$ fidelity, $d_{\mathrm{perc}}$ denotes the LPIPS perceptual distance~\cite{zhang2018unreasonable}, and $K_{\max}$ caps the per-image NFE that directly controls decoding latency.

\section{Proposed Realization-Coupled Bridge Flow Matching}

\subsection{Bridge Flow Matching}

Rather than initiating the generative process from isotropic Gaussian noise, our decoder constructs a probability bridge directly from the channel-induced state $x_c$ to the clean image $x$. We define the bridge mean $\mu_t$, variance $s_t^2$, and stochastic path $x_t$ as follows:
\begin{equation}
    \mu_t = (1{-}t)x_c + tx, \quad s_t^2 = \sigma^2 t(1{-}t),
    \label{eq:bridge_moments}
\end{equation}
\begin{equation}
    x_t = \mu_t + \sigma\sqrt{t(1{-}t)}\,\epsilon, \quad \epsilon\sim\mathcal{N}(0,I),
    \label{eq:bridge_path}
\end{equation}
where $t\in[0,1]$ denotes the continuous bridge time and $\sigma>0$ controls the degree of bridge stochasticity. The corresponding conditional velocity field is given by
\begin{equation}
    u_t(x_t\mid x_c,x) = x - x_c + \frac{1{-}2t}{2t(1{-}t)}\bigl(x_t - \mu_t\bigr).
    \label{eq:cond_velocity}
\end{equation}
The first term, $x - x_c$, represents the constant endpoint displacement, while the second term compensates for the Gaussian perturbation along the bridge. We train a neural velocity field $v_\theta(x_t,t,\xi)$ to approximate the marginal probability-flow field using the conditional FM (CFM) objective~\cite{lipman2023flow}:
\begin{equation}
    \mathcal{L}_{\mathrm{bfm}} = \mathbb{E}\left[\left\|v_\theta(x_t,t,\xi) - u_t(x_t\mid x_c,x)\right\|_2^2\right].
    \label{eq:cfm_loss}
\end{equation}

During inference, we obtain the reconstructed image $\hat{x}$ by integrating the empirical ODE $\dot{x}_t = v_\theta(x_t,t,\xi)$ from $x_0=x_c$ to $t=1$. We employ the Euler solver with $K$ steps of size $h=1/K$:
\begin{equation}
    x_{k+1} = x_k + h\,v_\theta(x_k, t_k, \xi), \quad t_k = kh.
    \label{eq:euler}
\end{equation}
Because the posterior initialization $x_c$ is close to the target $x$, the integration interval is significantly shorter than the noise-to-data path traversed by diffusion models. The Euler discretization error scales as $O(M_v/K)$, where $M_v$ bounds the curvature of the velocity field along the trajectory. By combining the channel-induced semantic initialization with the realization-coupled optimal transport (RC-OT) coupling introduced below, we effectively minimize $M_v$, thereby achieving high fidelity with very few solver steps.

\begin{proposition}[Low-NFE Euler Error]
    Assume that $v_\theta(\cdot,t,\xi)$ is $L_v$-Lipschitz in state, and that the continuous trajectory $x(t)$ satisfies the curvature bound:
    \begin{equation}
        \left\|\partial_t v_\theta(x(t),t,\xi)+J_xv_\theta(x(t),t,\xi)v_\theta(x(t),t,\xi)\right\|\le M_v
    \end{equation}
    for all $t\in[0,1]$. The Euler iterates $x_k$ from~\eqref{eq:euler} then satisfy:
    \begin{equation}
        \max_{0\le k\le K} \|x(t_k)-x_k\|_2 \le \frac{e^{L_v}-1}{2L_v}\frac{M_v}{K}.
        \label{eq:euler_bound_conf}
    \end{equation}
\end{proposition}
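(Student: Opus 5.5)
The plan is the classical global-error argument for the explicit Euler method: bound the local truncation error by the curvature constant $M_v$, propagate it with the Lipschitz constant $L_v$, and sum the resulting geometric series. Throughout, $x(t)$ denotes the exact solution of $\dot x(t)=v_\theta(x(t),t,\xi)$ with $x(0)=x_c$. We set $e_k=x(t_k)-x_k$ and $h=1/K$, so that $e_0=0$.

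\textbf{Local truncation error.} Along the exact trajectory, the chain rule gives
\[
\ddot x(t)=\partial_t v_\theta(x(t),t,\xi)+J_xv_\theta(x(t),t,\xi)\,v_\theta(x(t),t,\xi),
\]
so the curvature hypothesis is exactly $\|\ddot x(t)\|\le M_v$. This implicitly requires $v_\theta$ to be $C^1$ along the trajectory, which I take as given. Taylor's theorem in integral form then yields
\[
x(t_{k+1})=x(t_k)+h\,v_\theta(x(t_k),t_k,\xi)+\tau_k,
\qquad
\tau_k=\int_{t_k}^{t_{k+1}}(t_{k+1}-s)\,\ddot x(s)\,ds .
\]
Hence $\|\tau_k\|\le M_v h^2/2$. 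I use the integral form rather than a mean-value form because the latter fails for vector-valued functions.

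\textbf{Error recursion.} Subtracting the Euler update~\eqref{eq:euler} and applying the $L_v$-Lipschitz property in the state gives
\[
\|e_{k+1}\|\le(1+hL_v)\|e_k\|+\tfrac{M_v}{2}h^2 .
\]
Unrolling this from $e_0=0$ produces
\[
\|e_k\|\le\tfrac{M_v h^2}{2}\sum_{j=0}^{k-1}(1+hL_v)^j=\tfrac{M_v h}{2L_v}\big((1+hL_v)^k-1\big).
\]

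\textbf{Final bound.} Since $(1+hL_v)^k\le e^{khL_v}=e^{L_v t_k}\le e^{L_v}$ for every $k\le K$, we obtain
\[
\max_k\|e_k\|\le\frac{e^{L_v}-1}{2L_v}\,\frac{M_v}{K},
\]
which is \eqref{eq:euler_bound_conf}.

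\textbf{Main obstacle.} The only delicate point is justifying the identification of the curvature bound with $\|\ddot x\|$. This needs $x(\cdot)$ to be twice differentiable, which holds if $v_\theta$ is $C^1$ in $(x,t)$. Because the bound is stated only along the continuous trajectory, the Taylor remainder must be evaluated on $x(\cdot)$ itself and not on the Euler iterates, and the argument above does exactly that. The Lipschitz step is applied only to the difference of $v_\theta$ at the two points $x(t_k)$ and $x_k$ at the same time $t_k$. Therefore no curvature control is needed off the trajectory, and the argument closes.
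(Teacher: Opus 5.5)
Your proof is correct and follows essentially the same route as the paper: a one-step Taylor expansion of the exact trajectory with remainder bounded by $\tfrac{h^2}{2}M_v$, the recursion $\|e_{k+1}\|\le(1+hL_v)\|e_k\|+\tfrac{h^2}{2}M_v$ unrolled from $e_0=0$, and the bound $(1+hL_v)^K\le e^{L_v}$. Your integral-form remainder and explicit geometric-series summation fill in details the paper leaves implicit.
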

\begin{proof}
    A one-step Taylor expansion yields $x(t_{k+1})=x(t_k)+h v_\theta(x(t_k),t_k,\xi)+\tfrac{h^2}{2}r_k$ with $\|r_k\|\le M_v$. Defining the global error $e_k=x(t_k)-x_k$, we obtain the recursion $\|e_{k+1}\|\le (1+hL_v)\|e_k\|+\tfrac{h^2}{2}M_v$. Iterating from $e_0=0$ and applying $(1+hL_v)^K\le e^{L_v}$ with $h=1/K$ yields~\eqref{eq:euler_bound_conf}.
\end{proof}

This proposition directly yields a communication-oriented distortion decomposition. Let $x_1^\star$ denote the exact terminal state of the continuous ODE at $t=1$, and define the end-to-end objective:
\begin{equation}
    \mathcal D_K \triangleq \mathbb E\big[d_{\mathrm{fid}}(x,\hat x_K)+\lambda_{\mathrm{perc}}d_{\mathrm{perc}}(x,\hat x_K)\big].
    \label{eq:dist_obj_conf}
\end{equation}
Assuming the perceptual extractor is Lipschitz continuous such that $\mathcal D_K\le C_{\mathrm{lat}}\mathbb E\|x-\hat x_K\|_2^2$ for some constant $C_{\mathrm{lat}}>0$, we can bound the total distortion as:
\begin{equation}
    \begin{aligned}
        \mathcal D_K
         & \le 3C_{\mathrm{lat}}\,\mathbb E\|x-x_c\|_2^2 + 3C_{\mathrm{lat}}\,\mathbb E\|x_c-x_1^\star\|_2^2 \\
         & \quad + 3C_{\mathrm{lat}}\left(\frac{e^{L_v}-1}{2L_v}\frac{M_v}{K}\right)^2.
    \end{aligned}
    \label{eq:main_bound_conf}
\end{equation}
Equation~\eqref{eq:main_bound_conf} separates the distortion into three distinct sources: (1) the posterior-estimation error from the estimator $D_{\mathrm{pos}}$, (2) the intrinsic flow-matching residual of the learned vector field, and (3) the latency penalty induced by Euler discretization. This decomposition highlights why realization-coupled pairing is essential: by shortening and straightening the bridge paths, RC-OT reduces the effective curvature $M_v$, suppressing the $O(K^{-2})$ latency penalty for a given budget $K$.

\subsection{Coupling Mismatch in Semantic Communication}

The CFM objective in~\eqref{eq:cfm_loss} requires sampling endpoint pairs $(x_c,x)$ from a joint coupling $\pi(x_c,x)$. Standard FM techniques~\cite{lipman2023flow,liu2023rectified,fu2026ltt} form this coupling by drawing samples independently from the source and target marginals. However, in communication systems, each corrupted observation $x_c$ and its corresponding clean target $x$ are physically linked by the specific channel realization $\xi$ that generated them. This physical link defines a paired conditional law:
\begin{equation}
    q_{\mathrm{pair}}(x_c,x\mid\xi) = q(x\mid\xi)\,q(x_c\mid x,\xi).
    \label{eq:qpair}
\end{equation}
Whenever $x_c$ depends on $x$ through the channel, this true paired distribution differs strictly from the independent product distribution $q(x_c\mid\xi)\,q(x\mid\xi)$.

Integrating $x$ out of an independent coupling $\tilde\pi(x_c,x\mid\xi)=q(x_c\mid\xi)q(x\mid\xi)$ yields the correct source marginal $q(x_c\mid\xi)$ only if $x_c$ is entirely independent of $x$. For any coupling that ignores the shared channel realization, the induced source marginal $\tilde q(x_c\mid\xi)=\int\tilde\pi(x_c,x\mid\xi)\,dx$ diverges from the physical marginal $q(x_c\mid\xi)$ that the ODE encounters at deployment. The bridge therefore learns to integrate from an incorrect initial distribution. This mismatch between the training and deployment marginals constitutes a conditional covariate shift~\cite{cheng2025curse}.

Restricting every pair to share the same source sample and channel realization restores the correct marginal by construction. The same-realization coupling $\pi^{\mathrm{pair}}(x_c,x\mid\xi)=q(x\mid\xi)\,q(x_c\mid x,\xi)$ trivially satisfies $\int\pi^{\mathrm{pair}}(x_c,x\mid\xi)\,dx = q(x_c\mid\xi)$. Geometrically, this constraint shortens and straightens the bridge paths as illustrated in Fig.~\ref{fig:path_geometry}. This geometry reduces the velocity-field curvature $M_v$ that enters Proposition~1, directly suppressing the discretization error at a fixed NFE budget. We therefore need a coupling that concentrates mass on physical pairs while remaining differentiable and tractable on minibatches.

\begin{figure}[t]
    \centering
    \includegraphics[width=\linewidth]{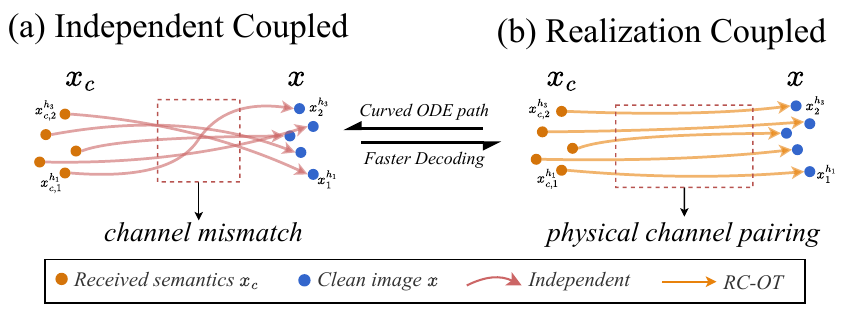}
    \caption{Coupling comparison. (a)~Independent coupling pairs $(x_c,x)$ regardless of the channel realization $h_i$, causing a channel mismatch and crossed, highly curved ODE paths. (b)~RC-OT coupling concentrates mass on physical pairs that share the same source and channel realization, producing straight, non-crossing trajectories that translate into faster decoding.}
    \label{fig:path_geometry}
\end{figure}

\subsection{Realization-Coupled Entropic OT}

Motivated by the mismatch argument above, we construct an optimal transport coupling that preserves the physical source--receiver correspondence while maintaining tractability over training minibatches. Let $C_{ij}=\|x_{c,i}-x_j\|_2^2$ denote the pairwise cost matrix over a minibatch of size $B$. We define a realization-coupled reference measure $R=\alpha I/B + (1{-}\alpha)\bm{a}\bm{b}^\top$, where $\bm{a},\bm{b}\in\mathbb{R}^B$ are uniform marginals and $\alpha\in[0,1]$ controls the concentration on same-realization pairs.

We obtain the realization-coupled entropic OT plan by solving the regularized optimal transport problem~\cite{akshay2025unified}:
\begin{equation}
    \pi^* = \arg\min_{\pi\in\Pi(\bm{a},\bm{b})} \langle C,\pi\rangle + \varepsilon\,\mathrm{KL}(\pi\,\Vert\, R),
    \label{eq:ot_sb}
\end{equation}
where $\varepsilon>0$ dictates the strength of the entropic regularization. Using a generalized Sinkhorn factorization~\cite{cuturi2013sinkhorn}, the solution takes the form:
\begin{equation}
    \pi^* = \mathrm{Diag}(\bm{u})\,\bm{K}\,\mathrm{Diag}(\bm{v}), \quad K_{ij} = R_{ij}\exp\!\left(-\frac{C_{ij}}{\varepsilon}\right),
    \label{eq:sinkhorn_form}
\end{equation}
where $\bm{u}$ and $\bm{v}$ are strictly positive scaling vectors computed via alternating row and column normalization.

As depicted in Fig.~\ref{fig:path_geometry}, this coupling strategy primarily concentrates probability mass along the diagonal (enforcing physical pairing). Simultaneously, the entropic regularization allows the optimizer to smoothly reassign pairs when the transport cost $C_{ij}$ is overwhelmingly high due to severe channel noise. Because the independent plan $\pi_{\mathrm{ind}}=\bm{a}\bm{b}^\top$ is always feasible for~\eqref{eq:ot_sb}, our optimal plan $\pi^*$ is mathematically guaranteed to achieve a regularized cost no worse than the independent coupling widely used in prior work.

\subsection{Training and Inference}

We train the proposed RC-BFM system in two stages. Stage~1 performs channel-free generative pre-training. We deliberately omit the semantic channel encoder $E_{\mathrm{tx}}$ and pre-train only the bridge refiner $B_\theta$ on clean images to establish a robust generative prior. Stage~2 incorporates the physical channel. We introduce the semantic encoder $E_{\mathrm{tx}}$ and the posterior estimator $D_{\mathrm{pos}}$, and jointly train the entire end-to-end system ($E_{\mathrm{tx}}$, $D_{\mathrm{pos}}$, and $B_\theta$) under noise and fading impairments. The endpoint pairs $(x_c, x)$ are drawn directly from the computed RC-OT coupling $\pi^*$. We optimize the total end-to-end objective:
\begin{equation}
    \mathcal{L} = \lambda_{\mathrm{post}}\mathcal{L}_{\mathrm{post}} + \lambda_{\mathrm{bfm}}\mathcal{L}_{\mathrm{bfm}} + \lambda_{\mathrm{img}}\mathcal{L}_{\mathrm{img}} + \lambda_{\mathrm{perc}}\mathcal{L}_{\mathrm{perc}},
    \label{eq:full_loss}
\end{equation}
where the posterior estimation loss is $\mathcal{L}_{\mathrm{post}} = \mathbb{E}[\|x_c - x\|_1 + \lambda_{\mathrm{perc}}^{(p)}\,d_{\mathrm{perc}}(x_c,x)]$. The remaining terms $\mathcal{L}_{\mathrm{img}}=\|\hat{x}-x\|_1$ and $\mathcal{L}_{\mathrm{perc}}=\|f_{\mathrm{perc}}(\hat{x})-f_{\mathrm{perc}}(x)\|_2^2$ enforce image fidelity and perceptual quality over the noisy channel. The complete channel-inclusive training loop is summarized in Algorithm~\ref{alg:train}. At inference time, the receiver simply computes $x_c=D_{\mathrm{pos}}(\bm{y},\xi)$ and solves the empirical ODE using $K$ Euler steps to yield the final reconstruction $\hat{x}=x_K$.

\begin{algorithm}[t]
    \caption{Two-Stage Training of RC-BFM}
    \label{alg:train}
    \begin{algorithmic}[1]
        \REQUIRE Image batch $\{x_i\}_{i=1}^B$, channel sampler
        \ENSURE Trained $E_{\mathrm{tx}}$, $D_{\mathrm{pos}}$, $B_\theta$
        \STATE \textit{/* Stage~1: channel-free pre-training of $B_\theta$ */}
        \STATE Form pseudo $x_c$ by perturbing $x_i$; draw $t\sim\mathcal{U}[\delta,1{-}\delta]$; build the bridge state via~\eqref{eq:bridge_path}; minimize $\mathcal{L}_{\mathrm{bfm}}$ to update $B_\theta$.
        \STATE \textit{/* Stage~2: channel-aware joint training */}
        \STATE Encode $\bm{s}_i{=}E_{\mathrm{tx}}(x_i)$; sample the channel and receive $\bm{y}_i$; estimate $x_{c,i}{=}D_{\mathrm{pos}}(\bm{y}_i,\xi_i)$; compute $\mathcal{L}_{\mathrm{post}}$.
        \STATE Build $C_{ij}{=}\|x_{c,i}-x_j\|_2^2$; solve RC-OT~\eqref{eq:ot_sb}; draw endpoint pairs from $\pi^*$.
        \STATE Sample $t\sim\mathcal{U}[\delta,1{-}\delta]$; construct $x_t$ via~\eqref{eq:bridge_path}; compute $\mathcal{L}_{\mathrm{bfm}},\mathcal{L}_{\mathrm{img}},\mathcal{L}_{\mathrm{perc}}$; update $E_{\mathrm{tx}},D_{\mathrm{pos}},B_\theta$ via~\eqref{eq:full_loss}.
    \end{algorithmic}
\end{algorithm}

\begin{figure*}[t]
    \centering
    \includegraphics[width=\linewidth]{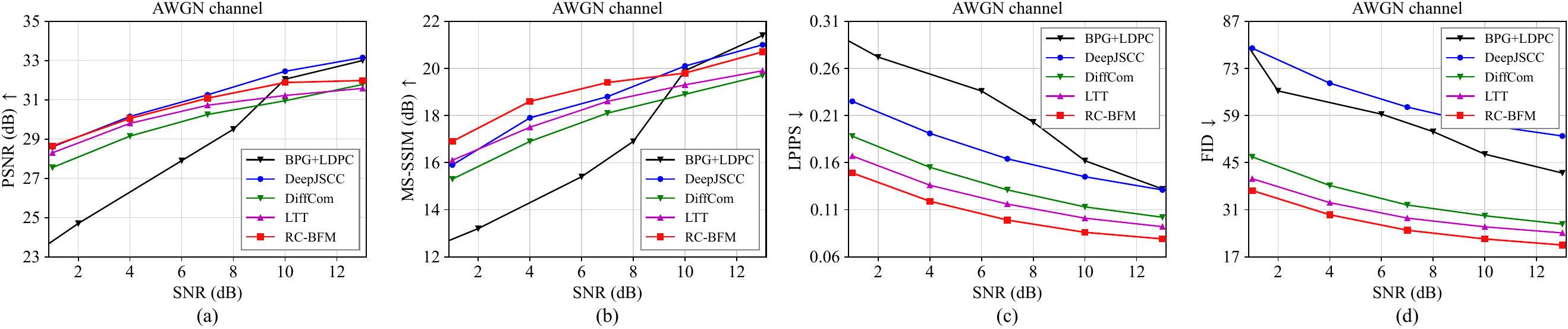}
    \caption{(a) PSNR, (b) MS-SSIM, (c) LPIPS, and (d) FID performance versus SNR of FFHQ-64$\times$64 with AWGN channel and CBR $\rho{=}1/4$.}
    \label{fig:ffhq_snr}
\end{figure*}

\section{Experimental Results}
\subsection{Simulation Settings}

We evaluate on CIFAR-10~\cite{krizhevsky2009learning} at $32{\times}32$ and FFHQ at $64{\times}64$ resolutions. Fidelity is measured by PSNR and MS-SSIM, and perceptual quality by LPIPS~\cite{zhang2018unreasonable} and FID~\cite{heusel2017gans}. For consistency with the curve plots, we report MS-SSIM in dB. We compare with \textit{BPG + 5G LDPC}~\cite{richardson2018design}, discriminative receiver \textit{DeepJSCC}~\cite{bourtsoulatze2019deep}, diffusion-based receiver \textit{DiffCom}~\cite{wang2025diffcom}, and FM-based receiver \textit{LTT}~\cite{fu2026ltt}. All modules use SongUNet~\cite{Song2021Score} with 67.3\,M parameters. Training uses AdamW with $\mathrm{lr}=10^{-4}$, cosine decay, mixed SNR $\gamma\sim\mathcal{U}[1,13]$\,dB, and CBR $\rho\in\{1/12,1/6,1/4,5/12,1/2\}$. Unless otherwise stated, we use $\sigma{=}0.5$, $\varepsilon{=}0.1$, $\alpha{=}0.5$, $\delta{=}0.01$, 50 Sinkhorn iterations, and Euler sampling with $4$ NFEs. Latency is measured per image on a single NVIDIA A800 GPU at batch size~1.

\subsection{Main Comparison}

\begin{table*}[t]
    \centering
    \caption{Main comparison under AWGN at $10$\,dB, $\rho{=}1/4$. Latency (ms) measured per image on a single A800 GPU. Best results in bold.}
    \label{tab:main_awgn}
    \begingroup
    \renewcommand{\arraystretch}{1.05}
    \begin{tabular}{llcccccc}
        \toprule
        \multirow{2}{*}{Dataset} & \multirow{2}{*}{Method}               & \multicolumn{2}{c}{Fidelity} & \multicolumn{2}{c}{Perception} & \multicolumn{2}{c}{Efficiency}                                                                   \\
        \cmidrule(lr){3-4} \cmidrule(lr){5-6} \cmidrule(lr){7-8}
                                 &                                       & PSNR $\uparrow$              & MS-SSIM (dB) $\uparrow$        & LPIPS $\downarrow$             & FID $\downarrow$ & NFE $\downarrow$ & Latency (ms) $\downarrow$ \\
        \midrule
        \multirow{5}{*}{CIFAR-10}
                                 & BPG\,+\,5G LDPC                       & 27.62                        & 19.5                            & 0.210                          & 80.2             & --               & 189.6                     \\
                                 & DeepJSCC~\cite{bourtsoulatze2019deep} & \textbf{31.27}               & \textbf{21.8}                           & 0.146                          & 92.1             & \textbf{1}       & \textbf{79.6}             \\
                                 & DiffCom~\cite{wang2025diffcom}        & 29.48                        & 19.4                           & 0.118                          & 31.8             & 200              & 7593                      \\
                                 & LTT~\cite{fu2026ltt}                  & 29.18                        & 18.9                           & 0.130                          & 38.1             & 10               & 258.6                     \\
                                 & \method{}                             & 30.34                        & 21.1                  & \textbf{0.098}                 & \textbf{25.6}    & 4                & 129.2                     \\
        \midrule
        \multirow{5}{*}{FFHQ-64$\times$64}
                                 & BPG\,+\,5G LDPC                       & 32.25                        & 19.9                           & 0.162                          & 47.5             & --               & 192.2                     \\
                                 & DeepJSCC~\cite{bourtsoulatze2019deep} & \textbf{32.85}               & \textbf{20.8}                  & 0.145                          & 56.2             & \textbf{1}       & \textbf{85.4}             \\
                                 & DiffCom~\cite{wang2025diffcom}        & 31.22                        & 19.3                           & 0.101                          & 25.9             & 200              & 7964                      \\
                                 & LTT~\cite{fu2026ltt}                  & 30.95                        & 18.9                           & 0.113                          & 29.2             & 10               & 319.7                     \\
                                 & \method{}                             & 31.58                        & 19.6                           & \textbf{0.086}                 & \textbf{22.3}    & 4                & 156.4                     \\
        \bottomrule
    \end{tabular}
    \endgroup
\end{table*}

Table~\ref{tab:main_awgn} shows results under AWGN at $10$\,dB and CBR $\rho{=}1/4$. We report the best metrics achieved by each model, regardless of decoding latency. DeepJSCC achieves the highest PSNR and MS-SSIM on FFHQ-64 but performs poorly on LPIPS and FID, reflecting the known limits of distortion-only training. DiffCom improves perceptual quality but requires $200$ NFEs and seconds of latency. LTT runs faster but trails \method{} in both LPIPS and FID. \method{} balances fidelity and perception: on FFHQ-64, it drops FID to $22.3$ (down from DiffCom's $25.9$ and LTT's $29.2$) using just $4$ NFEs and $156.4$\,ms, while maintaining competitive PSNR.

\begin{figure}[t]
    \centering
    \includegraphics[width=\columnwidth]{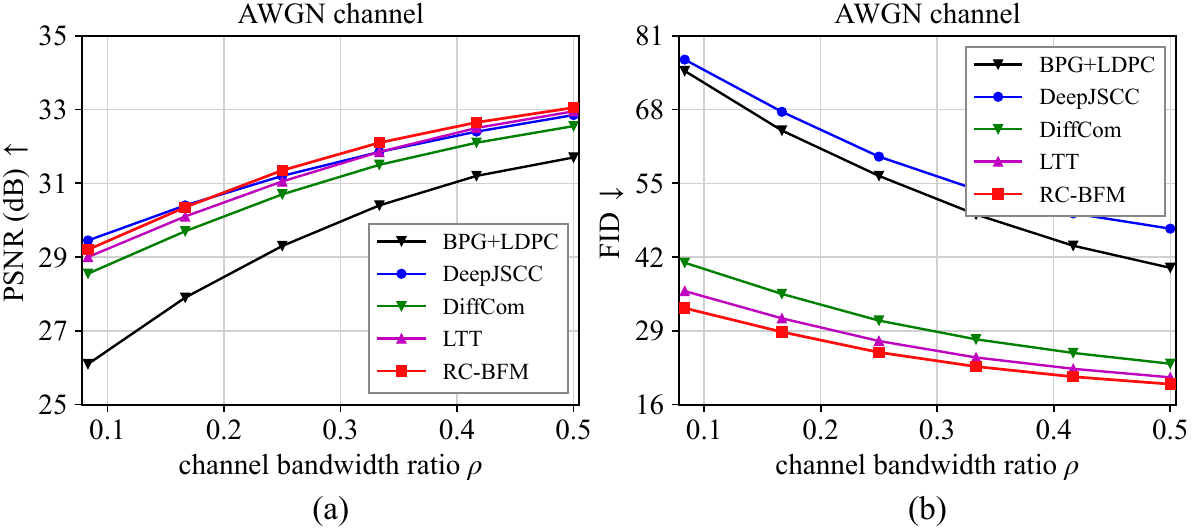}
    \caption{(a) PSNR and (b) FID versus CBR $\rho$ of FFHQ-64$\times$64 with AWGN channel at SNR $=9$\,dB.}
    \label{fig:ffhq_cbr}
\end{figure}

\begin{figure}[t]
    \centering
    \includegraphics[width=\columnwidth]{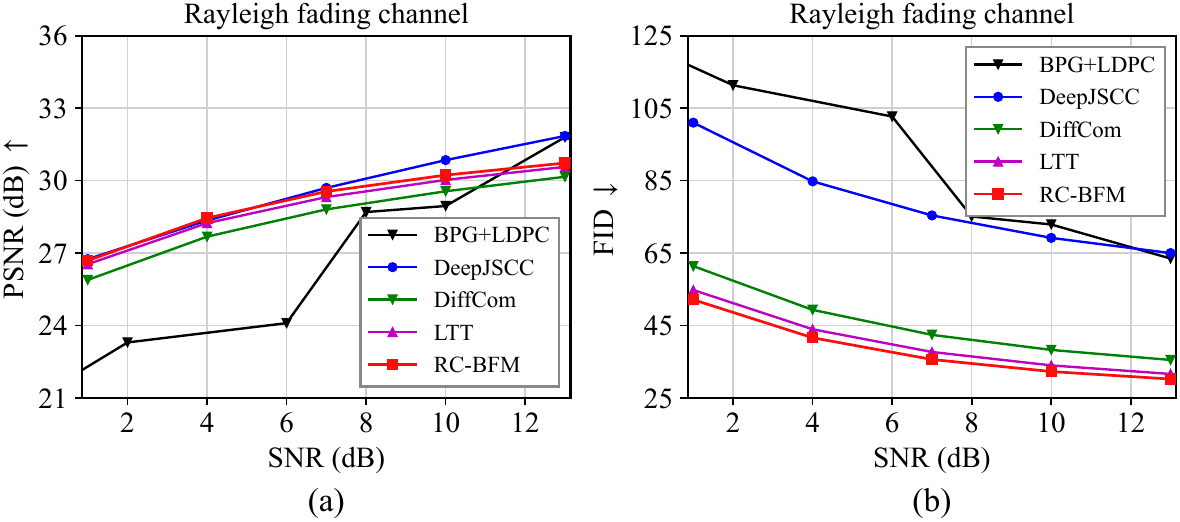}
    \caption{(a) PSNR and (b) FID versus SNR of FFHQ-64$\times$64 with Rayleigh fading channel and CBR $\rho{=}1/4$.}
    \label{fig:ffhq_rayleigh_snr}
\end{figure}

Fig.~\ref{fig:ffhq_snr} plots performance across AWGN SNRs. \method{} achieves the best LPIPS and FID at all SNRs and stays competitive in PSNR and MS-SSIM. At low SNRs, generative receivers beat BPG\,+\,LDPC even on distortion metrics. Above $10$\,dB, DeepJSCC and BPG\,+\,LDPC recover higher PSNR and MS-SSIM, exposing the trade-off between exact pixel matching and perceptual realism.

Fig.~\ref{fig:ffhq_cbr} shows the effect of available bandwidth at $9$\,dB. Performance improves rapidly at low CBRs before saturating. \method{} holds the lowest FID at all bandwidths, and its PSNR matches or exceeds the top baselines at medium-to-high CBRs. This perceptual advantage holds under Rayleigh fading (Fig.~\ref{fig:ffhq_rayleigh_snr}): \method{} consistently yields the best FID, DeepJSCC dominates PSNR, and BPG\,+\,LDPC only catches up at the highest SNR.

\subsection{Latency--Quality Trade-off}

\begin{table}[t]
    \centering
    \caption{\method{} on FFHQ-64$\times$64 under AWGN at SNR $=10$\,dB with $\rho{=}1/4$: latency--quality trade-off versus NFE. Latency (ms) is per image on one NVIDIA A800 GPU at batch size~$1$.}
    \label{tab:tradeoff_nfe}
    \begingroup
    \renewcommand{\arraystretch}{1.05}
    \begin{tabular}{c c c c c}
        \toprule
        NFE  & PSNR (dB) $\uparrow$ & LPIPS $\downarrow$ & FID $\downarrow$ & Latency (ms) $\downarrow$ \\
        \midrule
        $1$  & $29.94$              & $0.103$            & $28.4$           & $\mathbf{42.9}$           \\
        $2$  & $30.88$              & $0.094$            & $24.9$           & $82.2$                    \\
        $4$  & $\mathbf{31.58}$     & $0.086$            & $22.3$           & $156.4$                   \\
        $8$  & $31.52$              & $0.082$            & $20.8$           & $295.7$                   \\
        $16$ & $31.41$              & $0.080$            & $19.9$           & $575.4$                   \\
        $32$ & $31.26$              & $0.079$            & $19.4$           & $1136.8$                  \\
        $64$ & $31.10$              & $\mathbf{0.078}$   & $\mathbf{19.1}$  & $2270.1$                  \\
        \bottomrule
    \end{tabular}
    \endgroup
\end{table}

Table~\ref{tab:tradeoff_nfe} tracks how NFE budgets affect \method{}. FID and LPIPS improve steadily as NFE increases. PSNR, however, peaks at $4$~NFEs and then slowly declines---another instance of the fidelity--perception trade-off. We find $4$~NFEs to be a practical sweet spot: it maximizes PSNR ($31.58$\,dB) and hits an FID of $22.3$ in $156.4$\,ms. Doubling the budget to $8$~NFEs pushes FID down to $20.8$ but nearly doubles the latency.

\begin{figure*}[t]
    \centering
    \includegraphics[width=\linewidth]{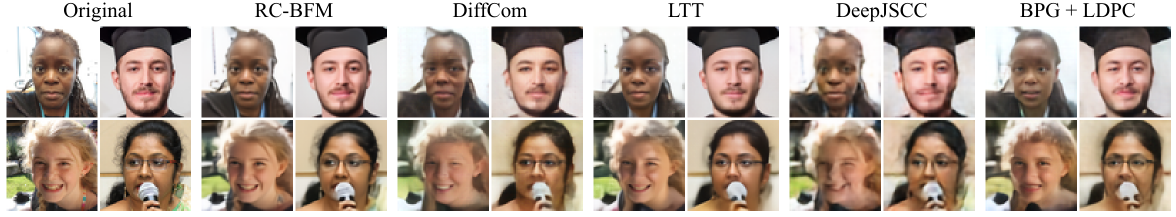}
    \caption{Qualitative comparison on FFHQ-64$\times$64 under AWGN at SNR $=10$\,dB, $\rho{=}1/4$. \method{} is evaluated at $4$~NFEs, DiffCom at $100$~NFEs, and LTT at $10$~NFEs.}
    \label{fig:ffhq_qualitative}
\end{figure*}

\begin{figure}[t]
    \centering
    \includegraphics[width=\columnwidth]{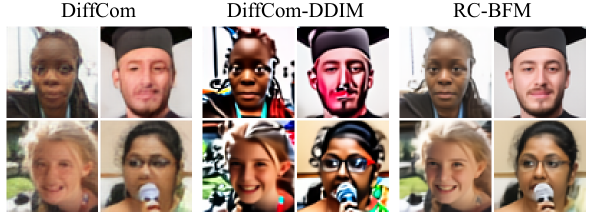}
    \caption{$10$~NFEs qualitative comparison on FFHQ-64$\times$64 under AWGN at SNR $=10$\,dB and $\rho{=}1/4$.}
    \label{fig:ffhq_qualitative_same_nfe}
\end{figure}

Figs.~\ref{fig:ffhq_qualitative} and~\ref{fig:ffhq_qualitative_same_nfe} show these results visually. Even at its much lower default NFE budget, \method{} produces the most consistent reconstructions. Forcing all generative receivers to use exactly $10$ NFEs makes the differences obvious: DiffCom under-converges and outputs blurry faces; DiffCom-DDIM speeds up the reverse chain but hallucinates facial structures and saturates colors. \method{} preserves both the global pose and fine details. This happens because the realization-coupled bridge initializes near the target, making few-step decoding naturally easy---we do not have to force an aggressive solver along a long noise-to-data path.

\subsection{Coupling Ablation}

\begin{table}[t]
    \centering
    \caption{Coupling ablation on FFHQ-64$\times$64, AWGN, $10$\,dB, $\rho{=}1/4$. Cost is the normalized minibatch matching cost; straightness is the cosine similarity between velocity and endpoint displacement; NFE$^*$ is the smallest NFE reaching FID $\le 25$.}
    \label{tab:ablation_ot}
    \resizebox{\linewidth}{!}{
        \begin{tabular}{lccccc}
            \toprule
            Coupling                     & PSNR $\uparrow$ & FID $\downarrow$ & Cost $\downarrow$ & Straight.\ $\uparrow$ & NFE$^*$ $\downarrow$ \\
            \midrule
            Independent coupling         & 29.98           & 31.4             & 0.842             & 0.624                 & 20                   \\
            Minibatch OT                 & 30.72           & 27.8             & 0.587             & 0.742                 & 12                   \\
            Entropic OT ($\alpha{=}0$)   & 31.10           & 25.4             & 0.534             & 0.798                 & 8                    \\
            RC-OT ($\alpha{=}0.5$, ours) & \textbf{31.58}  & \textbf{22.3}    & \textbf{0.421}    & \textbf{0.917}        & \textbf{4}           \\
            \bottomrule
        \end{tabular}
    }
\end{table}

Table~\ref{tab:ablation_ot} compares four coupling strategies. Independent coupling creates highly curved trajectories with the highest matching cost; it needs $20$ NFEs to break an FID of $25$. Introducing optimal transport and entropic regularization straightens the paths and improves reconstructions. RC-OT performs best across all metrics: it achieves the lowest matching cost ($0.421$), the straightest paths ($0.917$), and hits the target FID $\le 25$ in just $4$ NFEs. As argued in Section~III-B, enforcing the physical source--channel pairing makes decoding both faster and more accurate.

\section{Conclusion}

This paper identifies a coupling mismatch in FM receivers for semantic communication: independent endpoint sampling discards the physical source--channel pairing imposed by the channel realization, inducing a conditional train--test distribution shift and inflating trajectory curvature. Restricting training pairs to the same channel realization removes the shift by construction, which we instantiate as a realization-coupled entropic OT solution. Embedded into a bridge FM decoder operating in pixel space, the method achieves the best perceptual quality at the reported $4$-NFE operating point on CIFAR-10 and FFHQ-64$\times$64 under AWGN and Rayleigh fading. Future work will extend the framework to high-resolution latent-space decoding and end-to-end latency optimization.

\balance
\bibliographystyle{IEEEtran}
\bibliography{IEEEabrv,ref}

\end{document}